\theoremstyle{plain} 
\theoremstyle{plain} 
\theoremstyle{plain} \newtheorem{theor}{Theorem}[section] 
\theoremstyle{plain} 
\theoremstyle{plain} \newtheorem*{corol}{Corollary} 
\theoremstyle{remark} 
\theoremstyle{plain} 
\theoremstyle{remark}
\newcommand\CROSS[1]{%
  \hbox{%
    \vbox{
      \hrule
      \kern2.5pt
      \hbox{$#1$\,\strut}
    }%
  \vrule
  }\mskip\thickmuskip
}
\tikzset{
solid node/.style={circle,draw,inner sep=1.5,fill=black},
hollow node/.style={circle,draw,inner sep=1.5}
}
\newlength{\arrowsize}  
\begin{document}
\begin{center}
\Large{\textbf{A theorem on integration \\ based on the digital expansion}}\\ 
~\\
\large{Vladimir Garc\'{\i}a-Morales, Javier Cervera and Jos\'e A. Manzanares}\\

Departament de F\'{\i}sica de la Terra i Termodin\`amica\\ Universitat de Val\`encia, \\ E-46100 Burjassot, Spain
\\ 
\end{center}
\small{}
\noindent  \normalsize{}
~\\ The binary radix expansion of a real number can be used to code the outcome of any series of coin tosses, a fact that provides an intriguing link between number theory, measure theory and statistical physics. Inspired by this fact,  a general result is established for the definite integral of a differentiable function of a single variable that allows any such integral to be exactly written in terms of a double series. The theorem can be directly applied to a wide variety of integrals of physical interest and to derive new series expansions of real numbers and real-valued functions. We apply the theorem to the integration of the equation of motion in one dimension of classical Hamiltonian systems, focusing in the analysis of the nonlinear pendulum.
\pagebreak

\section{Introduction}

In his classical memoir of 1909 \cite{Borel}, \'Emile Borel made the observation that the binary digits of a real number are independent, marking the beginning of modern probability theory \cite{Kacker}.  An important theorem by Borel states that, asymptotically, almost every number has the same number of zeros and ones in its binary expansion. The binary expansion of a real number in the unit interval may be used to represent any sequence of random coin tosses. Indeed, coin tossing provides an elementary example of the way in which probabilistically natural considerations uncover interesting insights into real analysis \cite{Stroock}. A number with $n$ binary digits represents a sequence of $n$ independent coin tosses and the probability of obtaining that specific sequence is $1/2^n$ \cite{Kacker}.  

This connection between binary digits of numbers and coin tosses provides thus a straightforward approach to the notion of statistical independence. The latter was exploited by Mark Kac in his book \cite{Kacker} investigating the connection with measure theory and  deriving intriguing consequences when integration is involved. He showed how certain integrals involving $n$ independent variables can be factorized in the product of $n$ independent integrals. In this way, Kac not only found the proof of Vieta's formula for $\pi$ that opens his book but also explained how the digital expansion is related to the law of large numbers and the central limit theorem. Kac's approach can be extended to obtain deep results in the theory of normal numbers \cite{Goodman, Mendes}. Another work has elucidated and emphasized the elementary character of the Kac and Goodman approaches to normal numbers \cite{Nielsen}.

The Rademacher functions are defined in terms of the digits of the binary expansion of a real number. The connection of these functions with probability theory and the intuitions behind Kac's approach may bridge a gap between deterministic physical models and their statistical description, as shown in a recent work where local Poisson and local central limit theorems are proved for a simple deterministic ``Bernoulli'' model \cite{Beck}.

The connection between probability and integration has enormous practical use, as illustrated by e.g. Monte Carlo integration techniques which serve to numerically compute any definite integral with help of random numbers. Since the binary expansion is connected to probability and measure theory, we can ask whether binary expansions may provide a systematic method to calculate integrals. In this article, we give an affirmative answer to this question. We prove a theorem that yields a rather general means to compute definite integrals through series summation. The theorem directly applies to any differentiable function $f(x)$. In contrast with the Monte Carlo method of integration, which is essentially numerical, the theorem here can be exploited both numerically and analytically (in those cases where the double series can be explicitly handled).

The outline of this article is as follows. In Section \ref{digital} we introduce the digital expansion \cite{CHAOSOLFRAC, replace}. In Section \ref{main}, the theorem on integration and a number of corollaries are proved using this expansion. In Section \ref{discussion} some examples of application are discussed. Finally, in Section \ref{Hamiltonian} the integration of the equation of motion of Hamiltonian systems in one dimension is considered as a specific example in mathematical physics. We consider the libration motion of a nonlinear pendulum \cite{Chernikov} and write the resulting elliptic integral in terms of a double series that captures, to arbitrary precision, the value of the period.

\section{The digital expansion} \label{digital}

A real number $x \ge 0$ can be expanded in radix (base) $p \in \mathbb{N}$ as
\begin{equation}
x= a_{N}p^{N}+a_{N-1}p^{N-1}+\ldots+a_{1}p+a_{0}+a_{-1}p^{-1}+a_{2}p^{-2}+\ldots \label{bexpareal2}
\end{equation}
where the digits $a_{k} \in [0,p-1]$ are non-negative integers and $N= \lfloor \log_{p} x \rfloor$. 
By subtraction of 
\begin{equation}
\left \lfloor \frac{x}{p^{k}} \right \rfloor=a_{N}p^{N-k}+a_{N-1}p^{N-k-1}+\ldots +a_{k+1}p+a_{k}   \label{eq4}
\end{equation}
and
\begin{equation}
p\left \lfloor \frac{x}{p^{k+1}} \right \rfloor=a_{N}p^{N-k}+a_{N-1}p^{N-k-1}+\ldots +a_{k+1}p   ,   \label{eq5}
\end{equation}
both obtained from Eq. (\ref{bexpareal2}), it is apparent that the digit function
\begin{equation}
\mathbf{d}_{p}(k,x):= \left \lfloor \frac{x}{p^{k}} \right \rfloor-p\left \lfloor \frac{x}{p^{k+1}} \right \rfloor    \label{cucuAreal}
\end{equation}
coincides with the $k$-th digit of $x\ge 0$ when expanded in radix $p$, that is, $a_{k}=\mathbf{d}_{p}(k,x)$. All digits accompanying powers of the radix with exponent higher than $\lfloor \log_p x \rfloor$ are zero,
\begin{equation}
\mathbf{d}_{p}(k,x)=0 \qquad k > \lfloor \log_p x \rfloor . \label{zeroab}
\end{equation}
From Eq. (\ref{cucuAreal}), we also note the scaling relationship for $k, s \in \mathbb{Z}$
\begin{equation}
\mathbf{d}_{p}(k-s,x)=\mathbf{d}_{p}(k,p^{s}x) .  \label{scal}
\end{equation}

For any $x\in \mathbb{R}$, the convergent series in Eq.  (\ref{bexpareal2}) can then be presented as \cite{CHAOSOLFRAC, replace}
\begin{equation}
x=\text{sgn}(x)\sum_{k=-\infty}^{\lfloor \log_{p}|x| \rfloor} p^{k} \mathbf{d}_{p}(k,|x|). \label{idenreal}
\end{equation} 
Similarly, a complex number $z$ can be expanded as \cite{CHAOSOLFRAC}
\begin{equation}
z=\frac{z}{|z|}|z|=\mathrm{e}^{\mathrm{i}\theta}\sum_{k=-\infty}^{\lfloor \log_{p}|z| \rfloor} p^{k} \mathbf{d}_{p}(k,|z|) \label{idencom}
\end{equation}
where  $\rm{e}^{\rm{i}\theta}$ $\equiv z/|z|$ is the phase factor.

\section{Statement and proof of the main result} \label{main}

\begin{theor} \label{digitinte}  Let $f(x)$, $f:[a,b]\to \mathbb{R}$ be a differentiable function in the interval $[a,b]$ ($0\le a\le b$). Then,
\begin{eqnarray}
\int_{a}^{b}f(x)\mathrm{d}x&=&\sum_{k=-\lfloor \log_2 b \rfloor}^{\infty}\frac{1}{2^{k}}\sum_{n=  \left \lfloor 2^{k}a \right \rfloor+1}^{\left \lfloor 2^{k}b \right \rfloor}(-1)^{n+1}f\left(\frac{n}{2^{k}} \right) \label{ectheo}
\end{eqnarray}
where $\left \lfloor \ldots \right \rfloor$ denotes the floor (lower closest integer)  function. Furthermore, if  $f(x)$ has differentiable inverse $f^{-1}(y)$ and $f(b)>f(a)$ then, regardless of whether $a\le b$ or $b\le a$,
\begin{eqnarray}
\int_{a}^{b}f(x)\mathrm{d}x&=&bf(b)-af(a)+\sum_{k=-\lfloor \log_2 f(b) \rfloor}^{\infty}\frac{1}{2^{k}}\sum_{n= \left \lfloor 2^{k}f(a) \right \rfloor+1}^{\left \lfloor 2^{k}f(b) \right \rfloor}(-1)^{n}f^{-1}\left(\frac{n}{2^{k}} \right) .  
\label{ectheo3}
\end{eqnarray}
\end{theor}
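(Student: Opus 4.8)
The plan is to prove (\ref{ectheo}) by recognizing the double series on the right as a telescoping sum of dyadic Riemann sums. For each integer $k$ set
\[
S_k \;:=\; \frac{1}{2^{k}}\sum_{n=\lfloor 2^{k}a\rfloor+1}^{\lfloor 2^{k}b\rfloor} f\!\left(\frac{n}{2^{k}}\right).
\]
Two facts frame the argument. First, $S_k$ is a right-endpoint Riemann sum for $f$ on $[\lfloor 2^{k}a\rfloor/2^{k},\,\lfloor 2^{k}b\rfloor/2^{k}]$ with mesh $2^{-k}$; since $f$ is differentiable, hence continuous and uniformly continuous on the compact interval $[a,b]$, and since $\lfloor 2^{k}a\rfloor/2^{k}\to a$ and $\lfloor 2^{k}b\rfloor/2^{k}\to b$, one gets $S_k\to\int_a^b f(x)\,\mathrm{d}x$ as $k\to\infty$. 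Second, with $k_0:=-\lfloor\log_2 b\rfloor-1$ one has $2^{k_0}b=b\,2^{-\lfloor\log_2 b\rfloor-1}\in[1/2,1)$ and $0\le 2^{k_0}a\le 2^{k_0}b<1$, so $\lfloor 2^{k_0}a\rfloor=\lfloor 2^{k_0}b\rfloor=0$, the index range defining $S_{k_0}$ is empty, and $S_{k_0}=0$ (likewise $S_k=0$ for all $k\le k_0$, which is why the outer sum may legitimately be started at $-\lfloor\log_2 b\rfloor$).

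The heart of the matter is the per-level identity
\[
\frac{1}{2^{k}}\sum_{n=\lfloor 2^{k}a\rfloor+1}^{\lfloor 2^{k}b\rfloor}(-1)^{n+1}f\!\left(\frac{n}{2^{k}}\right)\;=\;S_k-S_{k-1}\qquad\text{for every }k .
\]
To obtain it, split the sum according to the parity of $n$. Writing $(-1)^{n+1}=2\,\mathbf{d}_2(0,n)-1$, the sign is $+1$ exactly on the dyadic points $n/2^{k}$ with $n$ odd (those ``appearing'' for the first time at resolution level $k$) and $-1$ on those with $n$ even (already present at level $k-1$); hence the alternating sum equals (odd part)$-$(even part), whereas $S_k$ equals (odd part)$+$(even part). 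It then suffices to show the even part is $\tfrac12 S_{k-1}$: substituting $n=2\ell$ and using the elementary identities $\lfloor\lfloor 2^{k}b\rfloor/2\rfloor=\lfloor 2^{k-1}b\rfloor$ and $\lceil(\lfloor 2^{k}a\rfloor+1)/2\rceil=\lfloor 2^{k-1}a\rfloor+1$, the even part collapses to $\tfrac{1}{2^{k}}\sum_{\ell=\lfloor 2^{k-1}a\rfloor+1}^{\lfloor 2^{k-1}b\rfloor}f(\ell/2^{k-1})=\tfrac12 S_{k-1}$. Thus the alternating sum equals $S_k-\tfrac12 S_{k-1}-\tfrac12 S_{k-1}=S_k-S_{k-1}$. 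Summing this over $k$ from $-\lfloor\log_2 b\rfloor=k_0+1$ up to $K$ telescopes to $S_K-S_{k_0}=S_K$, and letting $K\to\infty$ gives (\ref{ectheo}); because the partial sums of the double series are literally the $S_K$, no question of rearrangement or absolute convergence ever arises.

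For (\ref{ectheo3}) I would reduce to the already-proved (\ref{ectheo}) by an integration by parts followed by the monotone substitution $y=f(x)$: when $f^{-1}$ is differentiable and $0\le f(a)<f(b)$,
\[
\int_a^b f(x)\,\mathrm{d}x=\bigl[x f(x)\bigr]_a^b-\int_a^b x\,f'(x)\,\mathrm{d}x=b f(b)-a f(a)-\int_{f(a)}^{f(b)} f^{-1}(y)\,\mathrm{d}y ,
\]
an identity valid irrespective of whether $a\le b$ or $b\le a$ since both the fundamental theorem and the change of variables are insensitive to the orientation of the interval. Applying (\ref{ectheo}) to the last integral with $(f,a,b)$ replaced by $(f^{-1},f(a),f(b))$ and using $-(-1)^{n+1}=(-1)^{n}$ yields exactly (\ref{ectheo3}). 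I expect the only genuine obstacle to be the combinatorial bookkeeping in the parity split — in particular verifying that the even-$n$ subsum reproduces $S_{k-1}$ with precisely the stated summation limits, including the correct treatment of the two endpoints $n=\lfloor 2^{k}a\rfloor+1$ and $n=\lfloor 2^{k}b\rfloor$; the Riemann-sum convergence, the vanishing of $S_{k_0}$, the telescoping, and the integration-by-parts reduction are all routine.
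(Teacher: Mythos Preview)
Your proof is correct and takes a genuinely different route from the paper. The paper establishes (\ref{ectheo}) by first integrating by parts, $\int_a^b f = bf(b)-af(a)-\int_a^b x\,\mathrm{d}f(x)$, then expanding the factor $x$ in the binary radix as $\sum_k 2^k\,\mathbf{d}_2(k,x)$, exchanging sum and (Stieltjes) integral, and evaluating each $\int_a^b \mathbf{d}_2(k,x)\,\mathrm{d}f(x)$ by using that $\mathbf{d}_2(0,\cdot)$ is the indicator of the intervals $[2n+1,2n+2)$; the boundary pieces reassemble to cancel the $bf(b)-af(a)$ term. You bypass the digit function entirely: the $k$-th alternating inner sum is identified as the telescoping increment $S_k-S_{k-1}$ of dyadic right-endpoint Riemann sums, so the outer partial sum collapses to $S_K\to\int_a^b f$. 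Your argument is more elementary and in fact shows that (\ref{ectheo}) already holds for any Riemann-integrable $f$---differentiability is nowhere used in your proof of (\ref{ectheo}); it enters only in the integration-by-parts reduction for (\ref{ectheo3}), where your approach and the paper's coincide. The paper's route, on the other hand, ties the identity directly to the digit-expansion machinery that motivates the article and makes the role of the binary representation of $x$ explicit.
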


\begin{proof} Let $f:[a,b]\to \mathbb{R}$ be a differentiable function. By integration by parts,
\begin{equation}
\int_{a}^{b}f(x)\mathrm{d}x=\int_{a}^{b}\mathrm{d}(xf)-\int_{a}^{b}x\mathrm{d}f(x)=bf(b)-af(a)-\int_{a}^{b}x\mathrm{d}f(x) . \label{vamos0}
\end{equation}
If $f(x)$ has well defined inverse $f^{-1}(y)$, then
\begin{equation}
\int_{a}^{b}x\mathrm{d}f(x)=\int_{a}^{b}f^{-1}(f(x))\mathrm{d}f(x) = \int_{f(a)}^{f(b)}f^{-1}(y)\mathrm{d}y.
\label{vamosend}
\end{equation}
Eq. (\ref{ectheo3}) is obtained from Eq. (\ref{vamosend}) by applying Eq. (\ref{ectheo}) to the integral 
$\int_{f(a)}^{f(b)}f^{-1}(y)\mathrm{d}y$.

Expansion of $x$ in the binary radix and exchange of summation and integration because the series is convergent $\forall x$, lead to
\begin{equation}
\int_{a}^{b}x \mathrm{d}f(x)=\sum_{k=-\infty}^{\lfloor \log_2 b \rfloor}2^{k} \int_{a}^{b}\mathbf{d}_{2}(k,x)\mathrm{d}f(x) , \label{p1}
\end{equation}
where, in writing the upper bound $\lfloor \log_2 b \rfloor$, we have used Eq. (\ref{zeroab}) and took into account that  $x \le b$ $ \forall x \in [a,b]$ when $b \ge a$. If we use of the scaling relationship, Eq. (\ref{scal}), and take $y=2^{-k}x$ and $x=2^{k}y$, we get 
\begin{equation}
\int_{a}^{b}x \mathrm{d}f(x)=\sum_{k=-\infty}^{\lfloor \log_2 b \rfloor}2^{k}\int_{a}^{b}\mathbf{d}_{2}\left(0,2^{-k}x\right)\mathrm{d}f(x)=\sum_{k=-\infty}^{\lfloor \log_2 b \rfloor}2^{k}\int_{2^{-k}a}^{2^{-k}b}\mathbf{d}_{2}(0,y)  \mathrm{d}f(2^k y) . \label{eq15}
\end{equation}
The function $\mathbf{d}_{2}(0,y)$ is non-zero (and equal to one) only if $1\le y-2n<2$, i.e. if $2n+1\le y < 2n+2$, $n\in \mathbb{Z}$ \cite{CHAOSOLFRAC}. Then, since $2^{-k}a \le 2n+1$ and $2n+2 \le 2^{-k}b$, we find $\left  \lceil 2^{-k}a \right \rceil \le 2n+1$ and $2n+2 \le \left \lfloor 2^{-k}b \right \rfloor$ \cite{Knuth}. Therefore, the relevant integers $n$ are those that satisfy $N_{k,-}\le n \le N_{k,+}$ where 
\begin{eqnarray}
N_{k,-}&=&\left \lfloor 2^{-k}a \right \rfloor+1+\mathbf{d}_{2}\left(0, \left \lfloor 2^{-k}a \right \rfloor \right)=2\left(\left \lfloor 2^{-k}a \right \rfloor-\left \lfloor 2^{-k-1}a \right \rfloor           \right)+1  \nonumber \\
N_{k,+}&=&\left \lfloor 2^{-k}b \right \rfloor-\mathbf{d}_{2}\left(0, \left \lfloor 2^{-k}b \right \rfloor \right)=2\left \lfloor 2^{-k-1}b \right \rfloor .
\end{eqnarray}
Note that $N_{k,-}$ is odd and $N_{k,+}$ is even. By splitting the last integral in Eq. (\ref{eq15}) in different contributions
\begin{eqnarray}
&&\sum_{k=-\infty}^{\lfloor \log_2 b \rfloor}2^{k}\int_{2^{-k}a}^{2^{-k}b}\mathbf{d}_{2}(0,y)\mathrm{d}f(2^k y) =\sum_{k=-\infty}^{\lfloor \log_2 b \rfloor}2^{k}\left[\mathbf{d}_{2}\left(0, \left \lfloor 2^{-k}a \right \rfloor \right)\int_{2^{-k}a}^{\left \lfloor 2^{-k}a \right \rfloor+1}\mathrm{d}f(2^k y)\right. \nonumber \\
&&\qquad \qquad \qquad \qquad \qquad \qquad \quad \left.+\sum_{m=0}^{(N_{k,+}-N_{k,-}-1)/2}\int_{N_{k,-}+2m}^{N_{k,-}+2m+1} \mathrm{d}f(2^k y)\right. \nonumber \\
&&\qquad \qquad \qquad \qquad \qquad \qquad \quad \left.+\mathbf{d}_{2}\left(0, \left \lfloor 2^{-k}b \right \rfloor \right)\int_{\left \lfloor 2^{-k}b \right \rfloor}^{2^{k}b}\mathrm{d}f(2^k y)    \right] 
\end{eqnarray}
we obtain
\begin{eqnarray}
\int_{a}^{b}x\mathrm{d}f(x)&=&\sum_{k=-\infty}^{\lfloor \log_2 b \rfloor}2^{k} \left[\mathbf{d}_{2}\left(0, \left \lfloor 2^{-k}b \right \rfloor \right) f(b)-\mathbf{d}_{2}\left(0, \left \lfloor 2^{-k}a \right \rfloor \right) f(a)\right] \nonumber \\
&&+\sum_{k=-\infty}^{\lfloor \log_2 b \rfloor}2^{k}\sum_{n= \left \lfloor 2^{-k}a \right \rfloor+1}^{\left \lfloor 2^{-k}b \right \rfloor}(-1)^{n}f\left(2^{k} n\right)   \nonumber \\
&&=b f(b)-af(a) +\sum_{k=-\lfloor \log_2 b \rfloor}^{\infty}2^{-k}\sum_{n= \left \lfloor 2^{k}a \right \rfloor+1}^{\left \lfloor 2^{k}b \right \rfloor}(-1)^{n}f\left(2^{-k} n\right) ,  \label{vamos}
\end{eqnarray}
where we have used that
\begin{equation}
\sum_{k=-\infty}^{\lfloor \log_2 b \rfloor}2^{k}\mathbf{d}_{2}\left(0, \left \lfloor 2^{-k}x \right \rfloor \right)=\sum_{k=-\infty}^{\lfloor \log_2 b \rfloor}2^{k}\mathbf{d}_{2}\left(0, 2^{-k}x  \right)=\sum_{k=-\infty}^{\lfloor \log_2 b \rfloor}2^{k}\mathbf{d}_{2}\left(k, x  \right)=x .
\end{equation}
Equation (\ref{ectheo}) is obtained by replacing Eq. (\ref{vamos}) in Eq. (\ref{vamos0}). This completes the proof of the theorem. \end{proof}

The requirement that $a\ge 0$ for the lower integration limit can be lifted. 

\begin{corol} \label{coro1}
For $f(x)$ differentiable and $b\ge a$,
\begin{eqnarray}
\int_{a}^{b}f(x)\mathrm{d}x&=&\sum_{k=-\lfloor \log_2 (b-a) \rfloor}^{\infty}\frac{1}{2^{k}}\sum_{n=1}^{\left \lfloor 2^{k}(b-a) \right \rfloor}(-1)^{n+1}f\left(a+\frac{n}{2^{k}} \right) . \label{coro1e}
\end{eqnarray}
\end{corol}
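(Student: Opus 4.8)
The plan is to reduce Corollary \ref{coro1} to Theorem \ref{digitinte} by a simple change of variables that shifts the lower limit to the origin. First I would set $g(u):=f(a+u)$ on the interval $[0,b-a]$, which is differentiable there since $f$ is differentiable on $[a,b]$, and observe that the substitution $x=a+u$, $\mathrm{d}x=\mathrm{d}u$, gives
\begin{equation}
\int_{a}^{b}f(x)\,\mathrm{d}x=\int_{0}^{b-a}g(u)\,\mathrm{d}u . \label{shift}
\end{equation}
Because $0\le 0\le b-a$, the hypotheses of Theorem \ref{digitinte} are met for $g$ on $[0,b-a]$, so I can apply Eq. (\ref{ectheo}) directly to the right-hand side of (\ref{shift}).

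Next I would simplify the two sums that come out of (\ref{ectheo}) using the fact that the lower integration limit is now $0$. The outer sum starts at $k=-\lfloor\log_2(b-a)\rfloor$, matching the claimed range. For the inner sum, the lower index is $\lfloor 2^{k}\cdot 0\rfloor+1=1$ and the upper index is $\lfloor 2^{k}(b-a)\rfloor$, exactly the range appearing in (\ref{coro1e}). Finally, unwinding the definition of $g$, the summand becomes $(-1)^{n+1}g\!\left(n/2^{k}\right)=(-1)^{n+1}f\!\left(a+n/2^{k}\right)$, which is precisely the summand in (\ref{coro1e}). Assembling these pieces yields the corollary.

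The only point needing a word of care — and the place I would expect a referee to push — is the boundary term at the lower limit: when $a$ itself is a dyadic rational, the index $\lfloor 2^{k}a\rfloor$ in Theorem \ref{digitinte} is an integer and the interval $[2^{-k}a,\lfloor 2^{-k}a\rfloor+1)$ in the proof of the theorem degenerates appropriately; by working with $g$ on $[0,b-a]$ instead, we sidestep this entirely, since the lower limit is exactly $0$ and $\lfloor 0\rfloor=0$ with $\mathbf{d}_2(0,0)=0$, so no spurious boundary contribution arises. Thus the reduction is clean and the proof is essentially a two-line change of variables followed by bookkeeping of the summation indices.
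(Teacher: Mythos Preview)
Your proof is correct and follows exactly the paper's approach: apply Theorem~\ref{digitinte} to $\int_{0}^{b-a}f(a+u)\,\mathrm{d}u$ obtained by the substitution $x=a+u$. Your additional bookkeeping of the summation indices and the remark about the lower boundary term are more explicit than the paper's one-line proof but add nothing new.
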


\begin{proof}
The result follows by applying the theorem to the integral
\begin{equation}
\int_{a}^{b}f(x)\mathrm{d}x=\int_{0}^{b-a}f(a+x)\mathrm{d}x .
\end{equation}
\end{proof}

If we truncate the sum over $k$ Eq. (\ref{coro1e}) to $P$ terms, the function
\begin{equation}
I(x,P):=\sum_{k=-\lfloor \log_2 (b-a) \rfloor}^{-\lfloor \log_2 (b-a) \rfloor+P}\frac{1}{2^{k}}\sum_{n=1}^{\left \lfloor 2^{k}(b-a) \right \rfloor}(-1)^{n+1}f\left(a+\frac{n}{2^{k}} \right)  \label{Itrunc}
\end{equation}
approximates the integral numerically. There are $\left \lfloor (b-a)2^{-\lfloor \log_2 (b-a) \rfloor+P} \right \rfloor$ dyadic rationals involved in the tiniest scale (the different scales being governed by the index $k$) and the sum over $n$ runs over finite subsets of these. Therefore, a rough upper bound for the error in numerically aproximating the integral by Eq. (\ref{Itrunc}) is provided by the rectangle method \cite{Apostol} as 
\begin{equation}
\left |I(x,P)-\int_{a}^{b}f(x)\mathrm{d}x \right| < \frac{M_1 (b-a)^2}{2\left \lfloor (b-a)2^{-\lfloor \log_2 (b-a) \rfloor+P} \right \rfloor}
\end{equation}
where $M_1$ is the maximum value of $|f'(x)|$ on the interval.

If $f(x)$ is twice differentiable we can expand $f(x+h)$ in terms of $f(x)$ in a series that is different to the Taylor series since higher derivatives of $f(x)$ are not needed. 

\begin{corol} If $f(x)$ is twice differentiable in an interval of radius $h$ 
\begin{eqnarray}
f(x+h)&=&f(x)+\sum_{k=-\lfloor \log_2 h \rfloor}^{\infty}\frac{1}{2^{k}}\sum_{n=1}^{\left \lfloor 2^{k}h \right \rfloor}(-1)^{n+1}f'\left(x+\frac{n}{2^{k}} \right) . \label{coro2e}
\end{eqnarray}
\end{corol}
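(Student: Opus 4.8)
The plan is to recognize Eq. (\ref{coro2e}) as an immediate consequence of Corollary \ref{coro1} applied to the derivative $f'$, combined with the fundamental theorem of calculus. First I would observe that, since $f$ is twice differentiable on the interval of radius $h$ about $x$ (take $h>0$; $\log_2 h$ requires this), the function $g:=f'$ is itself differentiable on $[x,x+h]$, so all hypotheses of Corollary \ref{coro1} are satisfied for $g$ with lower limit $a=x$ and upper limit $b=x+h$, for which $b-a=h\ge 0$.

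Next I would substitute into Eq. (\ref{coro1e}) the replacements $a\mapsto x$ and $b-a\mapsto h$: the lower summation index becomes $-\lfloor\log_2 h\rfloor$, the inner sum runs over $1\le n\le\lfloor 2^k h\rfloor$, and the evaluation points are $x+n/2^k$, all of which lie in $[x,x+h]$ and hence in the domain where $f'$ is defined. This gives
\[
\int_{x}^{x+h} f'(t)\,\mathrm{d}t=\sum_{k=-\lfloor\log_2 h\rfloor}^{\infty}\frac{1}{2^{k}}\sum_{n=1}^{\left\lfloor 2^{k}h\right\rfloor}(-1)^{n+1}f'\!\left(x+\frac{n}{2^{k}}\right).
\]
By the fundamental theorem of calculus the left-hand side equals $f(x+h)-f(x)$, and moving $f(x)$ to the other side yields exactly Eq. (\ref{coro2e}).

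There is no genuine obstacle here: the analytic content — convergence of the double series and the legitimacy of exchanging summation and integration — was already absorbed into Theorem \ref{digitinte} and Corollary \ref{coro1}. The only points deserving care are the hypothesis bookkeeping (checking that ``twice differentiable'' is precisely what makes $f'$ a legitimate integrand for Corollary \ref{coro1}, and that the radius-$h$ assumption keeps the shifted arguments $x+n/2^k$ inside the region where $f'$ exists) and, as the surrounding text emphasizes, noting why the result is not the Taylor series: only $f'$ appears, with the entire $h$-dependence carried by the floor functions in the summation limits.
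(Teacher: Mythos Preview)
Your proposal is correct and matches the paper's own proof essentially verbatim: the paper applies Eq.~(\ref{coro1e}) to $t\mapsto f'(x+t)$ on $[0,h]$ (equivalently, to $f'$ on $[x,x+h]$) and then invokes the fundamental theorem of calculus to identify the integral with $f(x+h)-f(x)$. The only difference is the trivial change of variable $t\leftrightarrow t-x$, so there is nothing to add.
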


\begin{proof} From Eq. (\ref{coro1e}) we have
 \begin{eqnarray}
\int_{0}^{h}f'(x+t)\mathrm{d}t&=&\sum_{k=-\lfloor \log_2 h \rfloor}^{\infty}\frac{1}{2^{k}}\sum_{n=1}^{\left \lfloor 2^{k}h \right \rfloor}(-1)^{n+1}f'\left(x+\frac{n}{2^{k}} \right)=f(x+h)-f(x), \label{ectheo2cor2}
\end{eqnarray}
from which the result follows.
\end{proof}

In the particular case $h=1$, Eq. (\ref{coro2e}) reduces to
\begin{eqnarray}
f(x+1)&=&f(x)+\sum_{k=0}^{\infty}\frac{1}{2^{k}}\sum_{n=1}^{2^{k}}(-1)^{n+1}f'\left(x+\frac{n}{2^{k}} \right) . \label{coro2eSEC}
\end{eqnarray}

\begin{corol} If $f(x)$ is twice differentiable and periodic with period $T$ then 
\begin{eqnarray}
0&=&\sum_{k=-\lfloor \log_2 T \rfloor}^{\infty}\frac{1}{2^{k}}\sum_{n=1}^{\left \lfloor 2^{k}T \right \rfloor}(-1)^{n+1}f'\left(x+\frac{n}{2^{k}} \right) \label{coro3e}
\end{eqnarray}
\end{corol}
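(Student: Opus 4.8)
The plan is to derive this corollary directly from the previous one (the Taylor-free expansion of $f(x+h)-f(x)$, Eq.~(\ref{coro2e})) by exploiting periodicity. Since $f$ is periodic with period $T$, we have $f(x+T)=f(x)$, so $f(x+T)-f(x)=0$. Applying Eq.~(\ref{coro2e}) with $h=T$ gives the left-hand side $f(x+T)-f(x)=0$ equated to the stated double series, which is exactly Eq.~(\ref{coro3e}).

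Concretely, I would first remark that $f$ being periodic and twice differentiable on the whole line means the hypotheses of the preceding corollary are satisfied for every $x$ and for the choice $h=T$ (the ``interval of radius $h$'' condition is global here). Then I substitute $h=T$ into Eq.~(\ref{coro2e}):
\begin{equation}
f(x+T)=f(x)+\sum_{k=-\lfloor \log_2 T \rfloor}^{\infty}\frac{1}{2^{k}}\sum_{n=1}^{\left \lfloor 2^{k}T \right \rfloor}(-1)^{n+1}f'\left(x+\frac{n}{2^{k}} \right). \nonumber
\end{equation}
Finally I invoke $f(x+T)=f(x)$ to cancel the $f(x)$ terms on both sides, leaving $0$ on the left and the claimed series on the right. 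This is the entire argument.

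I do not anticipate a genuine obstacle: the only point requiring a word of care is making sure the appeal to Eq.~(\ref{coro2e}) is legitimate, i.e. that twice-differentiability of a periodic function on an interval containing $[x,x+T]$ holds for the chosen $h=T$. Since a periodic $C^2$ function is automatically $C^2$ on all of $\mathbb{R}$, this is immediate, and the proof reduces to the one-line substitution above.
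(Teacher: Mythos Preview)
Your argument is correct and is precisely the intended one: the paper states this corollary immediately after Eq.~(\ref{coro2e}) without a separate proof, because it follows at once by setting $h=T$ there and using $f(x+T)=f(x)$. Nothing more is needed.
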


\begin{corol} Let $f(x,y)$ be Riemann integrable in $x\in [a,b]$, $y\in [c,d]$, $b>a$, $d>c$, $a,b,c,d\in \mathbb{R}$ and let $g(x)=\int_{c}^{d}f(x,y)dy$ exist for every $x\in [a,b]$. Then,
\begin{eqnarray}
\int_{a}^{b}\int_{c}^{d}f(x,y)\mathrm{d}x\mathrm{d}y&=&\sum_{k=-\lfloor \log_2 b \rfloor}^{\infty}\sum_{h=-\lfloor \log_2 d \rfloor}^{\infty}\sum_{n=\left \lfloor 2^{k}a \right \rfloor+1}^{\left \lfloor 2^{k}b \right \rfloor}
\sum_{n=\left \lfloor 2^{k}c \right \rfloor+1}^{\left \lfloor 2^{k}d \right \rfloor}
\frac{(-1)^{n+m}}{2^{k+h}}f\left(\frac{n}{2^{k}},\frac{m}{2^{h}} \right) \label{ectheo2}
\end{eqnarray}
\end{corol}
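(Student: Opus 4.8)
The natural plan is to apply Theorem \ref{digitinte}, in the form of Eq. (\ref{ectheo}), once for each of the two variables. First set $g(x):=\int_c^d f(x,y)\,\mathrm{d}y$, which exists for every $x\in[a,b]$ by hypothesis, and apply Eq. (\ref{ectheo}) to $\int_a^b g(x)\,\mathrm{d}x$:
\begin{equation*}
\int_a^b\!\!\int_c^d f(x,y)\,\mathrm{d}y\,\mathrm{d}x=\sum_{k=-\lfloor\log_2 b\rfloor}^{\infty}\frac{1}{2^{k}}\sum_{n=\lfloor 2^{k}a\rfloor+1}^{\lfloor 2^{k}b\rfloor}(-1)^{n+1}\,g\!\left(\frac{n}{2^{k}}\right).
\end{equation*}
Each sampled value is itself a one-variable integral, $g(n/2^{k})=\int_c^d f(n/2^{k},y)\,\mathrm{d}y$, to which Eq. (\ref{ectheo}) applies in $y$:
\begin{equation*}
g\!\left(\frac{n}{2^{k}}\right)=\sum_{h=-\lfloor\log_2 d\rfloor}^{\infty}\frac{1}{2^{h}}\sum_{m=\lfloor 2^{h}c\rfloor+1}^{\lfloor 2^{h}d\rfloor}(-1)^{m+1}\,f\!\left(\frac{n}{2^{k}},\frac{m}{2^{h}}\right).
\end{equation*}
Substituting the second expansion into the first and using $(-1)^{n+1}(-1)^{m+1}=(-1)^{n+m}$ merges the four (now mutually independent) summation indices into the quadruple sum of Eq. (\ref{ectheo2}). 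When $a$ or $c$ is negative one runs the same argument after translating each interval to the origin via Eq. (\ref{coro1e}).

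The step that makes this rigorous, and the one I would isolate as a lemma, is the observation that the partial sum of Eq. (\ref{ectheo}) truncated at scale $k=K$ equals the right-endpoint Riemann sum $2^{-K}\sum_{j}f(j/2^{K})$ over the dyadic partition of mesh $2^{-K}$, up to a boundary term that vanishes as $K\to\infty$ (the weights attached to the dyadic nodes telescope to $2^{-K}$, exactly as in the rectangle-rule estimate quoted after Eq. (\ref{Itrunc})). This has two consequences. First, convergence of Eq. (\ref{ectheo}) to $\int_a^b f$ requires only Riemann integrability, not differentiability; since $g$ is Riemann integrable on $[a,b]$ by the Fubini theorem for Riemann integrals (using the stated hypotheses) and each $f(n/2^{k},\cdot)$ is Riemann integrable on $[c,d]$, both applications of the theorem above are legitimate. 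Second, the quadruple series of Eq. (\ref{ectheo2}) truncated at scales $k\le K$ and $h\le L$ equals, because the $x$ and $y$ operations commute, the two-dimensional right-endpoint dyadic Riemann sum $2^{-(K+L)}\sum_{j,i}f(j/2^{K},i/2^{L})$ of $f$ over the rectangle (again up to vanishing boundary terms).

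The main obstacle is the rearrangement: one must pass from the iterated series produced by the substitution, in which the $k$-summation sits outside the $h$-summation, to the symmetric quadruple sum of Eq. (\ref{ectheo2}). The combined series is not absolutely convergent --- at scale $(k,h)$ it has $\sim 2^{k+h}$ terms of size $\sim 2^{-(k+h)}$ --- so Fubini for series is unavailable and the reordering is genuine. I would instead invoke the quadrature picture just described: the doubly-truncated sums $Q_{K,L}$ are two-dimensional Riemann sums of the Riemann-integrable $f$, hence converge to $\int_a^b\!\int_c^d f$ as the mesh shrinks in \emph{either} direction, so every iterated limit of $Q_{K,L}$ (and in particular the two orders of summation, and the ``diagonal'' reading of Eq. (\ref{ectheo2})) equals the common value $\int_a^b\!\int_c^d f$. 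Carrying this limit argument out carefully, together with the control of the boundary terms when $b$ and $d$ are not dyadic, is where the real work lies; the bookkeeping of signs and index ranges in Eq. (\ref{ectheo2}) is routine by comparison.
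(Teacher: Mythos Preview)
Your approach---apply Eq.~(\ref{ectheo}) once in $x$ to $g$ and once in $y$ to each $f(n/2^{k},\cdot)$---is exactly the paper's, whose entire proof reads: ``The result follows from applying the theorem twice. Integration and summation can be exchanged because the function $f(x,y)$ satisfies all necessary conditions for the Fubini theorem to be applicable.'' Your attention to the rearrangement of the only conditionally convergent quadruple series, and to the mismatch between the corollary's Riemann-integrability hypothesis and the differentiability required by Theorem~\ref{digitinte}, goes well beyond what the paper provides; those concerns are legitimate, but the paper simply does not address them.
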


\begin{proof} The result follows from applying the theorem twice. Integration and summation can be exchanged because the function $f(x,y)$ satisfies all necessary conditions for the Fubini theorem \cite{DiBene} to be applicable.
\end{proof}


\section{Examples and applications in scale analysis and number theory} \label{discussion}

Our theorem yields an expression that is reminiscent of wavelet theory. The double series in which the integral is transformed somehow reminds of a discrete wavelet transform, where the two indices $k$ and $n$ over which the double sum runs correspond to the scale and localization indices respectively (see, e.g. \cite{Akansu}, pp. 396-401) and the integrand is evaluated on the dyadic rationals.

Let us consider some mathematical examples of the application of the theorem. We can take $f(t)=\ln t$ in the interval $[x,1)$, where $x\in (0,1)$. On one hand, from Eq. (\ref{ectheo3}) we have
\begin{eqnarray}
\int_{x}^{1}\ln t \mathrm{d}t = \int_{1}^{x}\ln \frac{1}{t} \mathrm{d}t = -x\ln x+\sum_{k=-\lfloor \log_2 \ln (1/x) \rfloor}^{\infty}\frac{1}{2^{k}}\sum_{1}^{\left \lfloor 2^{k}\ln (1/x) \right \rfloor}(-1)^{n}\mathrm{e}^{-n/2^{k}}  \label{int3digitlog}
\end{eqnarray}
and, on the other hand, we have
\begin{equation}
\int_{1}^{x}\ln \frac{1}{t} \mathrm{d}t=-x\ln x+x-1 .
\end{equation}
Thence, any real number in the unit interval can be expanded as a sum of weighted exponentials,
\begin{equation}
x=1+\sum_{k=-\lfloor \log_2 \ln (1/x) \rfloor}^{\infty}\sum_{n=1}^{\left \lfloor 2^{k}\ln (1/x) \right \rfloor}\frac{(-1)^{n}}{2^{k}}\mathrm{e}^{-n/2^{k}} . \label{int3digitlog}
\end{equation}
The importance of Eq. (\ref{int3digitlog}) lies in the fact that introducing any power $2^s$ ($s \in \mathbb{Z}$) of $x$ simply shifts the scale; note that $x^{2^s}$ belongs to the unit interval if $x$ does.
Indeed, replacement of $x$ by $x^{2^s}$ transforms Eq. (\ref{int3digitlog}) into
\begin{eqnarray}
x^{2^s}&=&1+\sum_{k=-\lfloor \log_2 \ln (1/x) \rfloor+s}^{\infty}\sum_{n=1}^{\left \lfloor 2^{k-s}\ln (1/x) \right \rfloor}\frac{(-1)^{n}}{2^{k}}\mathrm{e}^{-n/2^{k}} \nonumber \\
&=&1+\sum_{k=-\lfloor \log_2 \ln (1/x) \rfloor}^{\infty}\sum_{n=1}^{\left \lfloor 2^{k}\ln (1/x) \right \rfloor}\frac{(-1)^{n}}{2^{k+s}}\mathrm{e}^{-n/2^{k+s}}.
\label{int3digitlogP}
\end{eqnarray}
Comparison of Eqs. (\ref{int3digitlog}) and (\ref{int3digitlogP}) shows that, while the limits in the sums are the same in both cases, the summands are all scale shifted so that one has the transformation $2^k \to 2^{k+s}$. This can be compared to the effect that multiplying by $2^s$ has on the binary radix expansion of $x$, (Eq. (\ref{idenreal}) with $p=2$),
\begin{eqnarray}
2^s x&=&\text{sgn}(2^s x)\sum_{k=-\infty}^{\lfloor \log_{2}|x| \rfloor+s} 2^{k} \mathbf{d}_{2}(k,2^s |x|) \nonumber \\
&=&\text{sgn}(x)\sum_{k=-\infty}^{\lfloor \log_{2}|x| \rfloor+s} 2^{k} \mathbf{d}_{2}(k-s, |x|) \nonumber \\
&=&\text{sgn}(x)\sum_{k=-\infty}^{\lfloor \log_{2}|x| \rfloor} 2^{k+s} \mathbf{d}_{2}(k,|x|) . \label{idenreal2}
\end{eqnarray}
Equation (\ref{int3digitlogP}) makes apparent that exponentiation of $x$ by $2^s$ also shifts the scale of $x$ and, hence, is a convenient representation to describe these shifts.

	Direct application of Eq. (\ref{ectheo}) of the theorem yields the following expansions for any non-negative real number $x$
\begin{eqnarray}
x&=&\int_{0}^{x}\mathrm{d}t=\sum_{k=-\lfloor \log_2 x \rfloor}^{\infty}\frac{1}{2^{k}}\sum_{n=1}^{\left \lfloor 2^{k}x \right \rfloor}(-1)^{n+1}=\sum_{k=-\lfloor \log_2 x \rfloor}^{\infty}\frac{1+(-1)^{1+\left \lfloor 2^{k}x \right \rfloor}}{2^{k+1}} \label{x} \\
x^2&=&2\int_{0}^{x}t\mathrm{d}t=\sum_{k=-\lfloor \log_2 x \rfloor}^{\infty}\frac{1}{2^{2k-1}}\sum_{n=1}^{\left \lfloor 2^{k}x \right \rfloor}(-1)^{n+1}n=\sum_{k=-\lfloor \log_2 x \rfloor}^{\infty}\frac{(-1)^{1+\left \lfloor 2^{k}x \right \rfloor}}{2^{2k-1}}\left \lfloor \frac{1+\left \lfloor 2^{k}x \right \rfloor}{2} \right \rfloor
\end{eqnarray}
The theorem can also be used to derive series and numerical algorithms to calculate certain transcendental numbers, or to find alternative expressions for them, such as
\begin{eqnarray}
\frac{\sqrt{\pi}}{2}&=&\int_{0}^{\infty}\mathrm{e}^{-x^{2}}\mathrm{d}x=
\sum_{k=-\infty}^{\infty}\frac{1}{2^k}\sum_{n=1}^{\infty}(-1)^{n+1}\mathrm{e}^{-n^2/2^{2k}} =
\sum_{k=1}^{\infty}\frac{1}{2^k}\sum_{n=1}^{2^{k}}(-1)^{n+1}\sqrt{\ln \frac{2^{k}}{n} } . 
\end{eqnarray}

An application of interest in number theory concerns the logarithmic integral that appears in the prime number theorem \cite{Edwards}.  From Eq. (\ref{ectheo}) we have, for $x>2$, $x\in \mathbb{R}$
\begin{equation}
\text{Li}(x):= \int_{2}^{x}\frac{\mathrm{d}t}{\ln t}=\sum_{k=-\lfloor \log_2 x \rfloor}^{\infty}\frac{1}{2^{k}}\sum_{n=\left \lfloor 2^{k+1} \right \rfloor+1}^{\left \lfloor 2^k x \right \rfloor}\frac{(-1)^{n+1}}{\ln n - k\ln 2}  \label{LieqA}
\end{equation}
Convergence of the series in Eq.  (\ref{LieqA}) is warranted and the integrals can be calculated to arbitrary precision by truncating the series to a sufficiently large value of $k$. The partial sum
\begin{equation}
\text{Li} (x; P) := \sum_{k=-\lfloor \log_2 x \rfloor}^{-\lfloor \log_2 x \rfloor+P}\frac{1}{2^{k}}\sum_{n=\left \lfloor 2^{k+1} \right \rfloor+1}^{\left \lfloor 2^k x \right \rfloor}\frac{(-1)^{n+1}}{\ln n - k\ln 2} \label{Plin}
\end{equation}
satisfies $\lim_{P\to\infty}\text{Li}(x;P)=\text{Li}(x)$. The representations of the functions $\text{Li}(x)$ and $\text{Li} (x; 10)$  are indistinguishable for the values of $x$ shown in Fig. \ref{Lipo}. 

\begin{figure}
\includegraphics[width=1.0\textwidth]{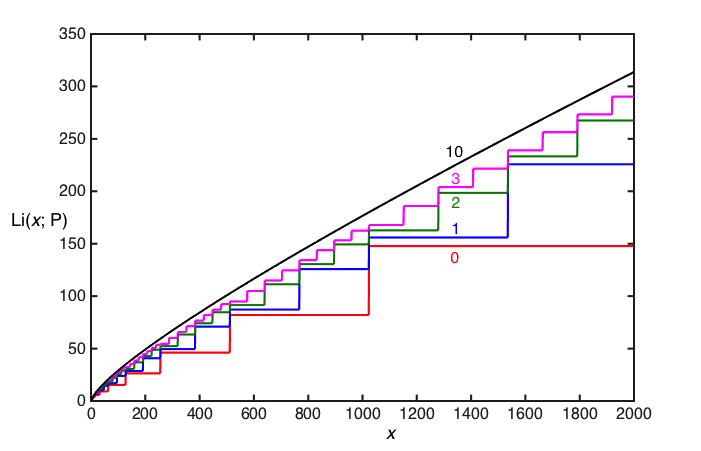}
\caption{\scriptsize{The function $\text{Li}(x; P)$ obtained from Eq. (\ref{Plin}) for the values of $P$ indicated close to the curves. For $P=10$ the curve is visually indistinguishable to the curve $P=\infty$ that yields the exact value of the integral.}} \label{Lipo}
\end{figure}

\section{Application in mathematical physics: Finite motion of classical Hamiltonian systems}\label{Hamiltonian}

Let us consider the classical motion of a particle with one degree of freedom and Hamiltonian
\begin{equation}
H=\frac{p^2}{2m}+U(q)
\end{equation}
where $m$ is the mass of the particle, $p$ its momentum, $q$ its position and $U(q)$ the potential energy. Since the total energy $E$ is conserved, the momentum can be obtained as
\begin{equation}
p=\pm \sqrt{2m[E-U(q)]}
\end{equation}
and, from Hamilton's equation of motion
\begin{equation}
\dot{q}=\dfrac{\partial H}{\partial p}=\pm \sqrt{\frac{2}{m}\left[E-U(q) \right]}
\end{equation} 
which can be integrated as \cite{Landau}
\begin{equation}
t=\int \frac{\mathrm{d}q}{\sqrt{\frac{2}{m}\left[E-U(q) \right]}}+\text{constant}.
\end{equation}

If the particle moves inside a potential well, then its motion is periodic. The above integral, evaluated between the turning points $q_1$ and $q_2$ at which the potential energy $U(q)$ equals the total energy $E$, gives the period   
\begin{equation}
T=\oint \frac{\mathrm{d}q}{\sqrt{\frac{2}{m}\left[E-U(q) \right]}}=\sqrt{2m}\int_{q_1}^{q_2} \frac{\mathrm{d}q}{\sqrt{E-U(q)}}.
\end{equation}
By using the theorem presented in this manuscript (corollary 1) we have
\begin{equation}
T=\sqrt{2m}\sum_{k=-\lfloor \log_2 (q_2-q_1) \rfloor}^{\infty}\sum_{n= 1}^{\left \lfloor 2^{k}(q_2-q_1) \right \rfloor}   \frac{(-1)^{n+1}}{2^{k}\sqrt{E-U\left(q_1+\frac{n}{2^{k}} \right)}}
\end{equation}
For example, for a nonlinear pendulum with potential energy $U(\theta)=-U_0\cos \theta$, with $\theta$ being the angle with respect to the vertical line and $U_{0}$ the potential energy minimum  \cite{Chernikov}, we have
\begin{equation}
T=\sqrt{\frac{2m}{E}}\sum_{k=-\lfloor \log_2 (2\theta_2) \rfloor}^{\infty}\sum_{n= 1}^{\left \lfloor 2^{k+1}\theta_2 \right \rfloor}   \frac{(-1)^{n+1}}{2^{k}\sqrt{1+\frac{U_0}{E} \cos \left(\theta_2-\frac{n}{2^{k}} \right)}}
\end{equation}
since, in a libration motion, the pendulum oscillates between the angles $q_1=\theta_1=-\theta_2$ and $q_2=\theta_2=\text{arccos}(-E/U_0)$. By using that $\cos \theta_{2}=1-2\sin^2(\theta_2/2)$, this can equivalently be written as
\begin{equation}
T=\sqrt{\frac{2m}{E+U_0}}\sum_{k=-\lfloor \log_2 (2\theta_2) \rfloor}^{\infty}\sum_{n= 1}^{\left \lfloor 2^{k+1}\theta_2 \right \rfloor}   \frac{(-1)^{n+1}}{2^{k}\sqrt{1-\eta^2 \sin^2 \left(\frac{\theta_2}{2}-\frac{n}{2^{k+1}} \right)}}
\end{equation}
where we have defined $\eta := \sqrt{2U_0/(E+U_0)}$. The integral diverges for $E=U_0$ (i.e. $\theta_2=\pi$) in which case the pendulum experiences a transition from libration to rotation. 

The motion of the pendulum is usually expressed with help of incomplete elliptic integrals of the first kind, defined by
\begin{equation}
F(\varphi | h)=\int_{0}^{\varphi}\frac{\mathrm{d}\theta}{\sqrt{1-h\sin^2\theta}}.
\end{equation}
The theorem indeed allows to numerically evaluate these integrals, truncating the series over $k$ to $P$ terms, as 
\begin{eqnarray}
F(\varphi | h, P)&=&\sum_{k=-\lfloor \log_2 \varphi \rfloor}^{-\lfloor \log_2 \varphi \rfloor+P}\sum_{n= 1}^{\left \lfloor 2^{k}\varphi \right \rfloor}   \frac{(-1)^{n+1}}{2^{k}\sqrt{1-h \sin^2 \left(\frac{n}{2^{k}} \right)}}. \label{FP}
\end{eqnarray}
For $P=10$, and the values of $\varphi=\pi/4$, $\pi/3$ and $\pi/2$ (Fig. \ref{figFP}), the functions $F(\varphi | h, P)$ are visually indistinguishable from the exact $F(\varphi | h)$ because the theorem implies
\begin{equation}
\lim_{P\to \infty}F(\varphi | h, P)=F(\varphi | h).
\end{equation}

\begin{figure}
\includegraphics[width=0.9\textwidth]{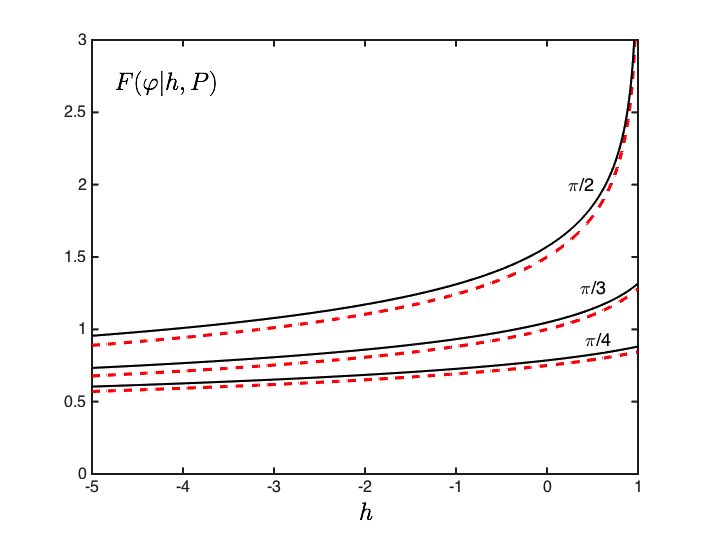}
\caption{\scriptsize{The partial sums $F(\varphi | h, P)$ given by Eq. (\ref{FP}) for the values of $\varphi$ indicated close to the curves and for $P=3$ (dashed curves) and $P=10$ (continuous curves).}} \label{figFP}
\end{figure}

Besides other methods, like the perturbation expansion of Kryloff and Bogoliuboff \cite{Fulcher}, the theorem presented here provides an alternative means to evaluate the elliptic integrals that govern the periodic motion of the simple pendulum. The generality of the theorem allows arbitrary potential functions to be handled in the same manner.

\section{Conclusions}

In this article, we have proved a theorem that allows to write an integral in terms of a convergent double series. The theorem has a wide applicability since the only requirement for the function in the integrand is being differentiable in the interval of integration.
The series can be truncated at any order to yield arbitrarily close approximations to the integral. 
 
We have illustrated with some mathematical examples the application of the theorem and we have provided an application to finding the period of finite motions of Hamiltonian systems in one dimension. Although the resulting integral can be solved by alternative methods, e.g. Gaussian numerical integration \cite{Percival}, quadrature rules only yield accurate approximations if the integrand is well approximated by polynomials of sufficiently low order. Our result is more general and does not require this.

\section*{Data availability statement}

Data sharing is not applicable to this article as no new data were created or analyzed in this study.
 

\end{document}